\newtheorem{Proposition}{Proposition}
\newtheorem{corollary}{Corollary}
\def\blfootnote{\xdef\@thefnmark{}\@footnotetext}
\begin{document}
	
	\title{\huge{Performance Analysis of Fluid Antenna-aided Backscatter Communications Systems}
	} 
	\author{Farshad~Rostami~Ghadi, \IEEEmembership{Member}, \textit{IEEE},~Masoud~Kaveh,~\IEEEmembership{Member}, \textit{IEEE},~Kai-Kit~Wong, \IEEEmembership{Fellow},~\textit{IEEE}
	}
	\maketitle
	\begin{abstract}
		This paper studies the performance of backscatter communications (BC) over emerging fluid antenna (FA) technology. In particular, a single-antenna source sends information to a FA reader through the wireless forward (i.e., source-to-tag) and backscatter (tag-to-reader) channels. For the considered BC, we first derive the cumulative distribution function (CDF) of the equivalent channel at the FA receiver, and then we obtain closed-form expressions of the outage probability (OP) and delay outage rate (DOR) under a correlated Rayleigh distribution. Moreover, in order to gain more insights into the system performance, we present analytical expressions of the OP and DOR at the high SNR regime. Numerical results indicate that considering the FA at the reader can significantly improve the performance of BC in terms of the OP and DOR compared with a single-antenna reader.
	\end{abstract}
	\begin{IEEEkeywords}
		Backscatter communication, fluid antenna system, correlated fading channel
	\end{IEEEkeywords}%\vspace{-3.5ex}
	\maketitle
\blfootnote{This work is supported by the Engineering and Physical Sciences Research Council (EPSRC) under Grant EP/W026813/1. For the purpose of open access, the authors will apply a Creative Commons Attribution (CC BY) licence to any Author Accepted Manuscript version arising.}
\blfootnote{\noindent Farshad Rostami Ghadi and  Kai-Kit Wong are with the Department of Electronic and Electrical Engineering, University College London, WC1E 6BT London, UK. (e-mail:$\{\rm f.rostamighadi,kai\text{-}kit.wong\}@ucl.ac.uk$).}
\blfootnote{\noindent Masoud Kaveh is with the Department of Information and Communication Engineering, Aalto University, 02150 Espoo, Finland. (e-mail:$\rm masoud.kaveh@aalto.fi$) }

\blfootnote{Digital Object Identifier 10.1109/XXX.2021.XXXXXXX}
\vspace{0mm}
\section{Introduction}\label{sec-intro}
Regarding the importance of massive connectivity in sixth-generation (6G) wireless technology and the escalating intricacy associated with system design in the context of ultra-massive multiple-input multiple-output (MIMO), there is an ever-growing demand for a more tractable approach that can enhance the efficiency of wireless communication systems \cite{wang2023extremely,urquiza2022spectral,marzetta2010noncooperative}. To this end, fluid antenna (FA) systems have recently emerged as a cutting-edge technology for future mobile communications, which can enhance diversity and multiplexing advantages by utilizing novel dynamic radiating structures \cite{wong2020fluid}. In particular, a FA includes a pixel-based structure or dielectric conductive \cite{huang2021liquid} that can switch its position (i.e., ports) in a pre-defined small space, where this unique feature can be especially exploited in mobile phones due to the physical limitations of antenna deployment. Moreover, compared with traditional multi-antenna systems, FA multiple access systems (FAMA) are able to eliminate the necessity for channel state information (CSI) at base stations (BSs) concerning precoding, user clustering, and power control; thereby, user equipment (UEs) are only required to perform single-user decoding \cite{wong2021fluid,wong2023slow,wong2023compact}.

Furthermore, given the practical applications of evolving technologies such as Radio Frequency Identification (RFID) systems and the Internet of Things (IoT) in realistic 6G wireless networks, considerable emphasis has been paid to backscatter communication (BC) in the recent years \cite{stockman1948communication}. Particularly, BC is a cost-effective wireless approach that enables low-power devices to sent data by reflecting or modulating existing radio frequency (RF) signals in the propagation environment \cite{van2018ambient}. In other words, backscatter devices are designed to consume minimal energy since they do not require generating their own signals. Therefore, integrating the reflective capabilities of BC with the dynamic and reconfigurable properties of FA systems can potentially provide a synergistic. FA systems are able to adaptively modify their radiating structures based on environmental conditions or network demands; this adaptability, when coupled with BC, allows for dynamic adjustments in the reflection and modulation of RF signals. Consequently, a highly flexible communication system that can optimize signal propagation, enhance spectral efficiency, and enable cost-effective wireless connectivity is provided. 

Great efforts have recently been carried out to develop the application of FA systems in different wireless communication scenarios from various aspects, e.g., channel modeling \cite{wong2022fast,khammassi2023new}, performance analysis\cite{wong2020performance,new2023fluid,10253941,10308583,ghadi2023gaussian}, channel estimation \cite{skouroumounis2022fluid,wang2023estimation}, and implementation \cite{borda2019low,jing2022compact}. However, to the best of the author's knowledge, there have been no previous works that combine the FA system with BC. Hence, motivated by the potential advantages of FA systems and the unique features of BC for the next generation of wireless technology, we evaluate the performance of wireless BC when backscatter devices take advantage of FA systems. In particular, we consider a single-antenna source that aims to send data to a FA reader through the forward (i.e., source-to-tag) and backscatter (tag-to-reader) channels. For this scenario, (i) We derive the cumulative distribution function (CDF) of the equivalent channel at the FA reader, i.e., the CDF of the maximum of $K$ random variables (RVs) such that each is the product of forward and backscatter channels, by using the copula-based approach; (ii) We obtain the outage probability (OP) and delay outage rate (DOR) in closed-form expressions under correlated Rayleigh fading channels; (iii) We derive the asymptotic expressions of the OP and DOR in the high SNR regime; (iv) We present numerical results to evaluate the performance considered for FA-aided BC, where the results indicate that the FA reader can significantly enhance the system performance compared with the single-antenna reader, namely, lower values of OP and DOR are achieved. 

\section{System Model}\label{sec-sys}
We consider a wireless FA-aided BC as illustrated in Fig. \ref{model}, where a single-antenna source aims to send information $x$ to a reader that is equipped by a FA through the forward and backscatter channels. Thus, the instantaneous received signal power at the tag is given by
\begin{align}
	P_\mathrm{t}=P_\mathrm{s}L_\mathrm{s}g_\mathrm{f},
\end{align}
in which $P_\mathrm{s}$ is the transmitted power by the source, $L_\mathrm{s}$ includes the gains of the transmit and receive antennas and frequency-dependent propagation losses, and $g_\mathrm{f}=\left|h_\mathrm{f}\right|^2$ defines the fading channel gain between the source and the tag, where $h_\mathrm{f}$ is the corresponding forward fading channel coefficient. On the reader side, we assume that the FA can freely move along $K$ pre-set positions\footnote{In this paper, the switching delay is assumed to be negligible, which is a reasonable assumption for the pixel-based FA \cite{song2013efficient}.} (i.e., ports), which are equally distributed on a linear space of length $W\lambda$ where $\lambda$ is the wavelength of propagation. Additionally, we suppose that the FA consists of only one radio frequency (RF) chain, and thus, only one port can be activated for communication. Under such assumptions, the received signal at the $k$-th port of the reader can be defined as
\begin{align}
	y_k=h_\mathrm{f}h_{\mathrm{b},k}x+z_k,
\end{align}
where $h_{\mathrm{b},k}$ denotes the backscatter channel coefficient between the tag and $k$-th port of the FA reader with the respective fading channel gain $g_{\mathrm{b},k}=\left|h_{\mathrm{b},k}\right|^2$  and $z_k$ is the independent identically distributed (i.i.d.) additive white Gaussian noise (AWGN) with zero mean and variance $\sigma^2$ at each port. Without loss of generality, we assume that the fading coefficients are normalized, i.e., $\mathbb{E}\left[g_\mathrm{f}\right]=\mathbb{E}\left[g_\mathrm{f}\right]=1$, where $\mathbb{E}\left[\cdot\right]$ denotes the expectation operator.

Furthermore, we assume that the FA is able to always switch to the best port with the strongest signal for communication, i.e.,
\begin{align}
	g_\mathrm{FA}=\max\left\{g_{\mathrm{p},k}\dots, g_{\mathrm{p},k}\right\}, \label{eq-g-fa}
\end{align}
in which $g_{\mathrm{p},k}=g_\mathrm{f}g_{\mathrm{b},k}$ denotes the product channel gain of the forward and backscatter links. It is worth noting that $g_{\mathrm{p},k}$ for $k\in\left\{1,\dots,K\right\}$ are spatially correlated since they can be arbitrarily close to each other so that such a spatial correlation between FA ports can be characterized by Jake's model as \cite{stuber2001principles}
\begin{align}
	\mu_{k}=\omega\mathcal{J}_0\left(\frac{2\pi\left(k-1\right)}{K-1}W\right),\label{eq-jake}
\end{align}
where $\mu_{k}$ denotes the correlation parameter that can control the dependency between $g_{\mathrm{p},k}$, $\omega$ is the large-scale fading effect, and $\mathcal{J}_0\left(.\right)$ represents the zero-order Bessel function of the first kind.
By doing so, the received signal-to-noise ratio (SNR) at the reader can be defined as
\begin{align}
	\gamma=\frac{P_\mathrm{t}g_\mathrm{FA}}{\sigma^2}=\bar{\gamma}g_\mathrm{FA}, \label{eq-snr}
\end{align}
in which $\bar{\gamma}=\frac{P_\mathrm{t}}{\sigma^2}$ is the average SNR. 
\begin{figure}[!t]
	\centering
	\includegraphics[width=1\columnwidth]{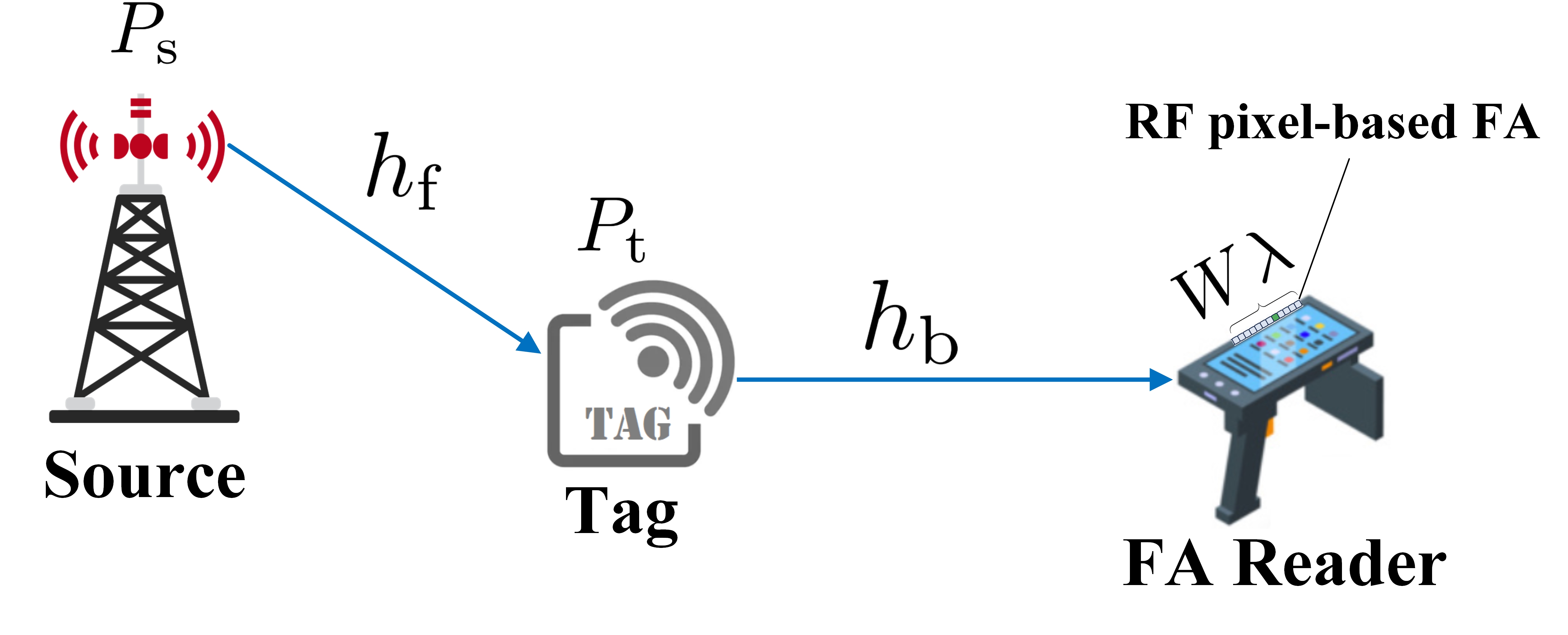}
	\caption{The system model represents FA-aided BC.}\label{model}
\end{figure}
\section{Performance Analysis}
Here, we first derive the CDF of the equivalent fading channel gain at the FA reader, and then the closed-form expressions of the OP and DOR are obtained. Moreover, we derive the asymptotic expressions of the OP and DOR in the high SNR regime.
\subsection{Statistical Characterization} 
From \eqref{eq-g-fa}, we can see that the CDF of the equivalent fading channel gain at the reader is defined as the CDF of the maximum of $K$ correlated RVs that each includes the product of two independent RVs. Assuming that all fading channels undergo Rayleigh distribution, the CDF of $g_\mathrm{FA}$ is derived as the following proposition. 
\begin{Proposition}\label{pro-cdf}
	The CDF of $g_\mathrm{FA}=\max\left\{g_{\mathrm{p},k}\dots, g_{\mathrm{p},k}\right\}$ for the considered FA-aided BC is given by
	\begin{align}
		F_{g_\mathrm{FA}}\left(r\right)=\left[\sum_{k=1}^K\left[\left(1-2\sqrt{r}\mathcal{K}_1\left(2\sqrt{r}\right)\right)^{-\theta}-1\right]+1\right]^{\frac{-1}{\theta}}\hspace{-2mm}, \label{eq-cdf}
	\end{align}
	in which $\mathcal{K}_1\left(\cdot\right)$ denotes the first-order modified Bessel function of the second kind and $\theta\in\left(0,\infty\right)$ is the dependence parameter so that $\theta\rightarrow 0$ represents the independent case.  
\end{Proposition}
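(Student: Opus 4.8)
The plan is to proceed in three stages: first obtain the common marginal distribution of each product gain $g_{\mathrm{p},k}$, then rewrite the CDF of the maximum as a joint CDF evaluated on the diagonal, and finally stitch the marginals together through a copula to account for the spatial correlation across the ports described in \eqref{eq-jake}.

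First I would compute the marginal CDF of $g_{\mathrm{p},k}=g_{\mathrm{f}}g_{\mathrm{b},k}$. Because both links are Rayleigh and the gains are normalized to unit mean, $g_{\mathrm{f}}$ and $g_{\mathrm{b},k}$ are mutually independent unit-mean exponential random variables, so conditioning on $g_{\mathrm{f}}=x$ gives
\begin{align}
F_{g_{\mathrm{p},k}}(r)=\int_0^\infty\left(1-e^{-r/x}\right)e^{-x}\,\mathrm{d}x=1-\int_0^\infty e^{-r/x-x}\,\mathrm{d}x.
\end{align}
Invoking the standard Bessel integral $\int_0^\infty e^{-a/x-bx}\,\mathrm{d}x=2\sqrt{a/b}\,\mathcal{K}_1\!\left(2\sqrt{ab}\right)$ with $a=r$ and $b=1$ yields the closed form $F_{g_{\mathrm{p},k}}(r)=1-2\sqrt{r}\,\mathcal{K}_1\!\left(2\sqrt{r}\right)$, which already produces the Bessel term appearing in \eqref{eq-cdf}. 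Since all ports share identical fading statistics, this marginal is common to every $k$.

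Next, from \eqref{eq-g-fa} the event $\{g_{\mathrm{FA}}\le r\}$ is exactly $\{g_{\mathrm{p},1}\le r,\dots,g_{\mathrm{p},K}\le r\}$, so $F_{g_{\mathrm{FA}}}(r)$ is the joint CDF of the $K$ correlated gains evaluated at the diagonal point $(r,\dots,r)$. To represent this joint CDF while respecting the spatial dependence, I would invoke Sklar's theorem and model the dependence structure with the Clayton copula $C(u_1,\dots,u_K)=\left[\sum_{k=1}^K u_k^{-\theta}-(K-1)\right]^{-1/\theta}$, whose scalar parameter $\theta$ governs the correlation strength and recovers independence as $\theta\to0$. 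Setting every argument $u_k=F_{g_{\mathrm{p},k}}(r)=1-2\sqrt{r}\,\mathcal{K}_1\!\left(2\sqrt{r}\right)$ and rewriting $K\,u^{-\theta}-(K-1)=\sum_{k=1}^K\bigl(u^{-\theta}-1\bigr)+1$ reproduces \eqref{eq-cdf} verbatim.

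The computationally routine part is the product-distribution integral, which collapses to a tabulated Bessel identity. The genuine obstacle is not calculation but modeling: one must justify that the single-parameter Clayton copula is a faithful yet tractable surrogate for the Jake's-model correlation of \eqref{eq-jake}, and explain how the copula parameter $\theta$ relates to the physical correlation $\mu_k$ induced by the port geometry. I expect this linkage, together with verifying that the $\theta\to0$ limit correctly returns the independent-port CDF $\bigl(1-2\sqrt{r}\,\mathcal{K}_1(2\sqrt{r})\bigr)^{K}$, to be the delicate point of the argument.
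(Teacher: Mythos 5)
Your proposal is correct and follows essentially the same route as the paper: the marginal $F_{g_{\mathrm{p},k}}(r)=1-2\sqrt{r}\,\mathcal{K}_1\left(2\sqrt{r}\right)$ via the tabulated Bessel integral, then Sklar's theorem with the Clayton copula evaluated on the diagonal. The modeling concerns you flag (justifying Clayton via tail dependence and linking $\theta$ to $\mu_k$ through Spearman's $\rho$) are exactly the points the paper addresses, partly in the proof and partly in the discussion immediately following the proposition.
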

\begin{proof}
	By using the CDF definition, $F_{g_\mathrm{FA}}\left(r\right)$ can be mathematically expressed as
	\begin{align}
		F_{g_\mathrm{FA}}\left(r\right)&=\Pr\left(\max\left\{g_{\mathrm{p},k}\dots, g_{\mathrm{p},k}\right\}\leq r\right)\\
		&=F_{g_{\mathrm{p},k}\dots, g_{\mathrm{p},k}}\left(r,\dots,r\right)\\
		&\overset{(a)}=C\left(	F_{g_{\mathrm{p},1}}\left(r\right),\dots,	F_{g_{\mathrm{p},K}}\left(r\right)\right), \label{eq-p1}
	\end{align}
	in which $(a)$ is obtained by using Sklar's theorem \cite[Thm. 2.10.9]{nelsen2006introduction} and $C\left(\cdot\right):\left[0,1\right]^d\rightarrow\left[0,1\right]$ denotes the copula function that is a joint CDF of $d$ random vectors on the unit cube $[0,1]^d$ with uniform marginal distributions, i.e.,
	\begin{align}
		C\left(u_1,\dots,u_d;\theta\right)=\Pr\left(U_1\leq u_1,\dots, U_d\leq u_d\right),
	\end{align}
	where $u_l=F_{s_{l}}\left(s_l\right)$ and $s_l$ is arbitrary RV for $l\in\left\{1,\dots,d\right\}$ and  $\theta$ represents the dependence parameter which can measure the linear/non-linear correlation between arbitrary correlated RVs. Hence, from \eqref{eq-p1}, we now need to find the CDF of the product channel, i.e., $F_{g_{\mathrm{p},K}}\left(r\right)$. To do so,
	assuming the Rayleigh fading channels, we can derive the CDF of $g_{\mathrm{p},k}=g_\mathrm{f}g_{\mathrm{b},k}$ as follows
	\begin{align}
		F_{g_{\mathrm{p},k}}\left(r\right)&=\Pr\left(g_\mathrm{f}g_{\mathrm{b},k}\leq r\right)\\
		&=\int_0^\infty f_{g_\mathrm{f}}\left(g_\mathrm{f}\right)F_{G_{\mathrm{b},k}}\left(\frac{r}{g_\mathrm{f}}\right)\mathrm{d}g_\mathrm{f}\\
		&=1-\int_0^\infty \mathrm{e}^{-\left(g_\mathrm{f}+\frac{r}{g_{\mathrm{f}}}\right)} \mathrm{d}g_\mathrm{f}\label{eq-p2}\\ 
		&\overset{(b)}{=}1-2\sqrt{r}\mathcal{K}_1\left(2\sqrt{r}\right), \label{eq-p3}
	\end{align}
	in which $(b)$ is obtained by solving the integral in \eqref{eq-p2} with the help of \cite[3.471.9]{gradshteyn2007table}. By inserting \eqref{eq-p3} into \eqref{eq-p1}, the CDF of $g_\mathrm{FA}$ is obtained for any arbitrary choice of copulas. However, in order to analyze the performance of the considered system model, it is required to select a copula that can describe the spatial correlation between FA ports. For this purpose, we exploit the Clayton copula because it can accurately describe the tail dependence between correlated RVs. It should be noted that an outage mainly occurs in deep fading conditions, where knowing the behavior of the tail dependence of fading coefficients is necessary; thereby, this choice is justified. Therefore, by substituting the Clayton copula definition from \cite[Def. 3]{10253941} into \eqref{eq-p1}, \eqref{eq-cdf} is obtained, and the proof is completed. 
\end{proof}
It is worth noting that since, in the copula definition, the non-linear transformations are applied to the considered RVs, the linear correlation cannot be maintained anymore. In other words, the dependence parameter $\theta$ does not necessarily represent the linear correlation between the correlated RVs. Therefore, rank correlation coefficients should be considered for the copula-based analysis since they are preserved under any monotonic transformation; consequently, they are able to describe the structure of dependency beyond linear correlation. To tackle this issue, we use Spearman's $\rho$ correlation coefficient that is identical to Pearson's product moment correlation coefficient for a pair of continuous RVs, i.e., $\rho=\mu_k$ \cite[Sec. 5.1.2]{nelsen2006introduction}. Therefore, $\rho$ between two arbitrary correlated RVs is mathematically defined as
\begin{align}
	\rho=12\iint_{\left[0,1\right]^2}u_1u_2\mathrm{d}C\left(u_1,u_2\right)-3. \label{eq-rho}
\end{align}
By plugging the Clayton copula into \eqref{eq-rho} and computing the integral, the Spearman's $\rho$ for the Clayton copula can be approximated as \cite{10364840}
\begin{align}
	\rho\approx\frac{3\theta}{2\left(\theta+2\right)}.
\end{align}
Then, considering the fact that the linear correlation coefficient is identical to Spearman's $\rho$, $\theta$ can be expressed in terms of the correlation parameter of Jake's model as follow
\begin{align}
	\theta\approx \frac{4\mu_k}{3-2\mu_k}. \label{eq-theta}
\end{align}
Thus, by substituting \eqref{eq-theta} into \eqref{eq-cdf}, the CDF of $g_\mathrm{FA}$ in Proposition \ref{pro-cdf} can be defined in terms of Jake's model.
\subsection{OP Analysis}
OP is the key performance metric in wireless communications that is defined as the probability of the instantaneous SNR $\gamma$ below the required SNR threshold $\gamma_\mathrm{th}$, i.e., 
$P_\mathrm{o}=\Pr\left(\gamma\leq\gamma_\mathrm{th}\right)$. Therefore, the OP for the considered system model is derived as the following proposition. 
\begin{Proposition}\label{pro-out}
	The OP for the considered FA-aided BC under correlated Rayleigh fading channels is given by
	\begin{align}
		P_\mathrm{o}\hspace{-1mm}=\hspace{-1mm}\left[\sum_{k=1}^K\left[\left(1-2\sqrt{\frac{\gamma_\mathrm{th}}{\bar{\gamma}}}\mathcal{K}_1\left(2\sqrt{\frac{\gamma_\mathrm{th}}{\bar{\gamma}}}\right)\right)^{ \frac{4\mu_k}{2\mu_k-3}}\hspace{-2mm}-1\right]\hspace{-1mm}+1\right]^{\frac{2\mu_k-3}{4\mu_k}}\hspace{-2mm}. \label{eq-out}
	\end{align}
	in which $\gamma_\mathrm{th}$ is the SNR threshold and $\mu_{k}$ is defined in \eqref{eq-jake}.
\end{Proposition}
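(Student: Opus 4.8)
The plan is to recognize that the OP is nothing more than the CDF of $g_\mathrm{FA}$ evaluated at a single threshold point, so that Proposition~\ref{pro-cdf} already supplies essentially all of the work. I would begin from the definition $P_\mathrm{o}=\Pr\left(\gamma\leq\gamma_\mathrm{th}\right)$ and invoke the SNR relation \eqref{eq-snr}, namely $\gamma=\bar{\gamma}g_\mathrm{FA}$. Since $\bar{\gamma}>0$ is a deterministic scaling constant, the event $\left\{\bar{\gamma}g_\mathrm{FA}\leq\gamma_\mathrm{th}\right\}$ is identical to $\left\{g_\mathrm{FA}\leq\gamma_\mathrm{th}/\bar{\gamma}\right\}$, whence $P_\mathrm{o}=F_{g_\mathrm{FA}}\left(\gamma_\mathrm{th}/\bar{\gamma}\right)$.

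With this reduction in hand, the remaining steps are purely mechanical. First, I would substitute $r=\gamma_\mathrm{th}/\bar{\gamma}$ into the closed-form CDF \eqref{eq-cdf}, which replaces each occurrence of the term $2\sqrt{r}\mathcal{K}_1\left(2\sqrt{r}\right)$ by $2\sqrt{\gamma_\mathrm{th}/\bar{\gamma}}\,\mathcal{K}_1\left(2\sqrt{\gamma_\mathrm{th}/\bar{\gamma}}\right)$. Second, I would express the dependence parameter in terms of Jake's model through \eqref{eq-theta}, i.e. $\theta\approx 4\mu_k/(3-2\mu_k)$, and carry this expression through both exponents of \eqref{eq-cdf}.

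The only place where a slip could occur is the sign bookkeeping of the two exponents, so that is where I would concentrate the care. Substituting $\theta\approx 4\mu_k/(3-2\mu_k)$ turns the inner exponent $-\theta$ into $4\mu_k/(2\mu_k-3)$ and the outer exponent $-1/\theta$ into $(2\mu_k-3)/(4\mu_k)$; verifying these two sign flips before collecting terms is genuinely the sole subtlety. Gathering everything then yields \eqref{eq-out}, which completes the argument. I therefore anticipate no real obstacle: the proposition is an immediate corollary of Proposition~\ref{pro-cdf} together with the linear SNR scaling \eqref{eq-snr}, and the ``proof'' amounts to a faithful substitution followed by careful simplification of the exponents.
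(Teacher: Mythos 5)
Your proposal is correct and follows exactly the paper's route: reduce $P_\mathrm{o}$ to $F_{g_\mathrm{FA}}\left(\gamma_\mathrm{th}/\bar{\gamma}\right)$ via the linear SNR relation \eqref{eq-snr}, then substitute into the CDF of Proposition~\ref{pro-cdf} with $\theta$ replaced through \eqref{eq-theta}. Your extra attention to the sign of the exponents ($-\theta\mapsto 4\mu_k/(2\mu_k-3)$ and $-1/\theta\mapsto(2\mu_k-3)/(4\mu_k)$) is the only detail the paper leaves implicit, and you handle it correctly.
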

\begin{proof}
	By inserting the SNR of the considered FA-aided BC from \eqref{eq-snr} into the OP definition, we have
	\begin{align}
		P_\mathrm{o}=\Pr\left(g_\mathrm{FA}\leq\frac{\gamma_\mathrm{th}}{\bar{\gamma}}\right)=F_{g_\mathrm{FA}}\left(\frac{\gamma_\mathrm{th}}{\bar{\gamma}}\right).\label{eq-prof}
	\end{align}
	Now, by applying the CDF of $g_\mathrm{FA}$ from Proposition \ref{pro-cdf} into  \eqref{eq-prof}, the proof is accomplished. 
\end{proof}
\subsection{DOR Analysis}
DOR is a momentous metric in wireless networks to evaluate the performance of ultra-reliable and low-latency communications
(URLLC) which is denoted as the probability that the transmission delay for a certain amount of data $R$
in a wireless channel with a bandwidth $B$ exceeds a certain predefined threshold $T_\mathrm{th}$, i.e., $\Pr\left(T_\mathrm{dt}>T_\mathrm{th}\right)$, in which $
T_\mathrm{dt}=\frac{R}{B\log_2\left(1+\gamma\right)}$ defines the delivery time \cite{yang2019ultra}. Therefore, the DOR for the considered system model can be achieved as the following proposition. 
\begin{Proposition}\label{pro-dor}
	The DOR for the considered FA-aided BC under correlated Rayleigh fading channels is given by
	\begin{align}\notag
		&P_\mathrm{dor}\\
		&=\hspace{-1mm}\left[\sum_{k=1}^K\left[\left(1-2\sqrt{\frac{\hat{T}_\mathrm{th}}{\bar{\gamma}}}\mathcal{K}_1\left(2\sqrt{\frac{\hat{T}_\mathrm{th}}{\bar{\gamma}}}\right)\right)^{ \frac{4\mu_k}{2\mu_k-3}}\hspace{-4mm}-1\right]\hspace{-1mm}+1\right]^{\frac{2\mu_k-3}{4\mu_k}}\hspace{-5mm}, \label{eq-dor}
	\end{align}
	in which $\hat{T}_\mathrm{th}=\mathrm{e}^{\frac{R\ln 2}{BT_\mathrm{th}}}$ is defined in \eqref{eq-jake}
\end{Proposition}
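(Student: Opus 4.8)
The plan is to mirror the proof of Proposition~\ref{pro-out}: reduce the DOR to a single evaluation of the CDF $F_{g_\mathrm{FA}}$ established in Proposition~\ref{pro-cdf}, so that the statistical content (the copula together with the Bessel-function integral) is inherited rather than recomputed. Starting from the definition $P_\mathrm{dor}=\Pr\left(T_\mathrm{dt}>T_\mathrm{th}\right)$, I would substitute the delivery time $T_\mathrm{dt}=\frac{R}{B\log_2\left(1+\gamma\right)}$ and observe that, for $\gamma>0$, the map $\gamma\mapsto T_\mathrm{dt}$ is strictly decreasing, since $\log_2\left(1+\gamma\right)$ is strictly increasing and positive. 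Hence the upper-tail event on the delay is equivalent to a lower-tail event on the received SNR, which is exactly the form handled by the CDF.

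Next I would carry out the threshold conversion explicitly. Rearranging $\frac{R}{B\log_2\left(1+\gamma\right)}>T_\mathrm{th}$ gives $\log_2\left(1+\gamma\right)<\frac{R}{BT_\mathrm{th}}$, and exponentiating yields $\gamma<2^{\frac{R}{BT_\mathrm{th}}}-1=\mathrm{e}^{\frac{R\ln 2}{BT_\mathrm{th}}}-1=\hat{T}_\mathrm{th}-1$, which both identifies the equivalent SNR threshold and motivates the definition of $\hat{T}_\mathrm{th}$. Using $\gamma=\bar{\gamma}g_\mathrm{FA}$ from \eqref{eq-snr}, the event becomes $g_\mathrm{FA}<\left(\hat{T}_\mathrm{th}-1\right)/\bar{\gamma}$, so that $P_\mathrm{dor}=F_{g_\mathrm{FA}}\left(\left(\hat{T}_\mathrm{th}-1\right)/\bar{\gamma}\right)$. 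Up to the additive constant inside the argument (which is negligible at the large thresholds typical of URLLC, or may be absorbed into the definition of $\hat{T}_\mathrm{th}$), this coincides with the argument $\hat{T}_\mathrm{th}/\bar{\gamma}$ appearing in \eqref{eq-dor}.

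Finally I would substitute the closed-form CDF from Proposition~\ref{pro-cdf} evaluated at $r=\left(\hat{T}_\mathrm{th}-1\right)/\bar{\gamma}$ and replace the dependence parameter through Jake's model, $\theta=\frac{4\mu_k}{3-2\mu_k}$ from \eqref{eq-theta}, exactly as was done to pass from \eqref{eq-cdf} to \eqref{eq-out}; this reproduces \eqref{eq-dor}. I expect no genuine obstacle here: the only point demanding care is the strict monotonicity used to flip the inequality and the consequent delay-to-SNR threshold conversion, whereas the heavy statistical lifting is already completed in Proposition~\ref{pro-cdf}. In that sense the DOR is essentially a corollary of the OP analysis, obtained by replacing the SNR threshold $\gamma_\mathrm{th}$ with the delay-induced threshold $\hat{T}_\mathrm{th}-1$.
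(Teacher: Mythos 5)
Your proposal is correct and follows essentially the same route as the paper: invert the strictly decreasing map from SNR to delivery time and evaluate the CDF of Proposition \ref{pro-cdf} at the resulting threshold, exactly as in the OP proof. You are in fact slightly more careful than the paper's own proof, which silently replaces the exact threshold $(\hat{T}_\mathrm{th}-1)/\bar{\gamma}$ by $\hat{T}_\mathrm{th}/\bar{\gamma}$ when passing from the delay event to the SNR event --- the $-1$ term you flag is genuinely dropped there without comment, so \eqref{eq-dor} as stated holds only up to that (high-threshold) approximation.
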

\begin{proof}
	By inserting the delivery time into the DOR definition, we have
	\begin{align}
		P_\mathrm{dor}&=\Pr\left(\frac{R}{B\log_2\left(1+\bar{\gamma}g_\mathrm{FA}\right)}>T_\mathrm{th}\right)\\
		&=\Pr\left(g_\mathrm{FA}\leq{\frac{\mathrm{e}^{\frac{R\ln 2}{BT_\mathrm{th}}}}{\bar{\gamma}}}\right)=F_{g_\mathrm{FA}}\left(\frac{\hat{T}_\mathrm{th}}{\bar{\gamma}}\right).
	\end{align}
	Now, by inserting $\hat{T}_\mathrm{th}=\mathrm{e}^{\frac{R\ln 2}{BT_\mathrm{th}}}$ into the CDF of $g_\mathrm{FA}$ from Proposition \ref{pro-cdf}, the proof is completed. 
\end{proof}
\subsection{Asymptotic Analysis}
Although the derived OP and DOR in Propositions \ref{pro-out} and \ref{pro-dor} are in simple closed-form expressions, we are interested in the asymptotic behavior of the obtained metrics at the high SNR regime (i.e., $\gamma\rightarrow\infty$) to gain more insights into the system performance. To do so, we exploit the series expansion of the Bessel function $\mathcal{K}_1\left(r\right)$ when $r\rightarrow 0$ as follow
\begin{align}
	\mathcal{K}_1\left(r\right)\approx \frac{1}{r}+\frac{r}{4}\left(2\zeta-1\right)+\frac{r}{2}\log\left(\frac{r}{2}\right), \label{eq-bessel}
\end{align}
where $\zeta$ is the Euler-Mascheroni constant \cite{weisstein2002euler}. Hence, the asymptotic expressions of the OP and DOR for the considered system model can be obtained in the following corollary. 
\begin{corollary}
	The asymptotic expressions of the OP and DOR for the considered FA-aided BC at the high SNR regime, i.e., $\bar{\gamma}\rightarrow\infty$ are respectively given by \eqref{eq-asy-out} and \eqref{eq-asy-dor}.
	\begin{figure*}[!b]
		\normalsize
		%\hrulefill
		\setcounter{equation}{23}
		\begin{align}
			P_\mathrm{o}^\infty\approx\left[\sum_{k=1}^K\left[\left(\frac{\gamma_\mathrm{th}}{\bar{\gamma}}\left[1-2\zeta-2\log\left(\sqrt{\frac{\gamma_\mathrm{th}}{\bar{\gamma}}}\right)\right]\right)^{ \frac{4\mu_k}{2\mu_k-3}}-1\right]+1\right]^{\frac{2\mu_k-3}{4\mu_k}}. \label{eq-asy-out}
		\end{align}
		\hrulefill
		\setcounter{equation}{24}
		\begin{align}
			P_\mathrm{dor}^\infty\approx\left[\sum_{k=1}^K\left[\left(\frac{\hat{T}_\mathrm{th}}{\bar{\gamma}}\left[1-2\zeta-2\log\left(\sqrt{\frac{\hat{T}_\mathrm{th}}{\bar{\gamma}}}\right)\right]\right)^{ \frac{4\mu_k}{2\mu_k-3}}-1\right]+1\right]^{\frac{2\mu_k-3}{4\mu_k}}. \label{eq-asy-dor}
		\end{align}
		\hrulefill
	\end{figure*}
\end{corollary}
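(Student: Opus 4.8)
The plan is to start from the exact OP expression in Proposition \ref{pro-out} and push $\bar{\gamma}\to\infty$ through the only transcendental ingredient, namely the inner Bessel block $1-2\sqrt{r}\,\mathcal{K}_1\left(2\sqrt{r}\right)$ evaluated at $r=\gamma_\mathrm{th}/\bar{\gamma}$. Since the outer summation, the exponents $\frac{4\mu_k}{2\mu_k-3}$, and the final power $\frac{2\mu_k-3}{4\mu_k}$ carry no dependence on $\bar{\gamma}$, it suffices to find a simple asymptotic equivalent for this single block and substitute it back into the closed form. As $\bar{\gamma}\to\infty$ the argument $2\sqrt{\gamma_\mathrm{th}/\bar{\gamma}}\to 0$, which is exactly the regime in which the small-argument series \eqref{eq-bessel} is valid.

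Concretely, I would set $t=2\sqrt{\gamma_\mathrm{th}/\bar{\gamma}}$ so that the block reads $1-t\,\mathcal{K}_1\left(t\right)$. Multiplying \eqref{eq-bessel} by $t$ gives
\[
t\,\mathcal{K}_1\left(t\right)\approx 1+\frac{t^2}{4}\left(2\zeta-1\right)+\frac{t^2}{2}\log\left(\frac{t}{2}\right),
\]
and the crucial step is that the leading $1/t$ term produces a constant $1$ that cancels exactly against the standalone $1$ in $1-t\,\mathcal{K}_1\left(t\right)$. Re-expressing $t^2/4=\gamma_\mathrm{th}/\bar{\gamma}$, $t^2/2=2\gamma_\mathrm{th}/\bar{\gamma}$, and $t/2=\sqrt{\gamma_\mathrm{th}/\bar{\gamma}}$, the block collapses to $\frac{\gamma_\mathrm{th}}{\bar{\gamma}}\left[1-2\zeta-2\log\left(\sqrt{\gamma_\mathrm{th}/\bar{\gamma}}\right)\right]$, which is precisely the quantity raised to the power $\frac{4\mu_k}{2\mu_k-3}$ in \eqref{eq-asy-out}. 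Substituting this equivalent into Proposition \ref{pro-out} then yields \eqref{eq-asy-out} directly.

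For the DOR, I would observe that the OP and DOR in Propositions \ref{pro-out} and \ref{pro-dor} share an identical functional form, differing only in that $\gamma_\mathrm{th}$ is replaced by the fixed constant $\hat{T}_\mathrm{th}=\mathrm{e}^{R\ln 2/(BT_\mathrm{th})}$. Since $\hat{T}_\mathrm{th}$ does not scale with $\bar{\gamma}$, the argument $2\sqrt{\hat{T}_\mathrm{th}/\bar{\gamma}}$ likewise tends to zero, so the same expansion and the same $1/t$-against-$1$ cancellation go through verbatim, producing \eqref{eq-asy-dor}.

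The computation is essentially mechanical and I do not anticipate a genuine obstacle; the one point requiring care is the bookkeeping around that cancellation, since it is what removes the $O(1)$ contribution and exposes the leading $\frac{\gamma_\mathrm{th}}{\bar{\gamma}}\log(\cdot)$ term. I would also verify that the three-term truncation in \eqref{eq-bessel} is adequate, i.e.\ that the neglected higher-order corrections are $o\!\left(\frac{\gamma_\mathrm{th}}{\bar{\gamma}}\log\frac{\gamma_\mathrm{th}}{\bar{\gamma}}\right)$ and therefore do not perturb the stated first-order behaviour inside each summand.
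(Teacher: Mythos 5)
Your proposal is correct and follows essentially the same route as the paper: expand $\mathcal{K}_1$ via \eqref{eq-bessel} at the small argument $2\sqrt{\omega/\bar{\gamma}}$, observe that the $1/t$ term cancels the standalone $1$ so the block $1-2\sqrt{r}\,\mathcal{K}_1(2\sqrt{r})$ collapses to $r\left[1-2\zeta-2\log\sqrt{r}\right]$, and substitute back into Propositions \ref{pro-out} and \ref{pro-dor}. If anything, your bookkeeping with $t=2\sqrt{r}$ is cleaner than the paper's, which misprints the argument of $\mathcal{K}_1$ as $2\omega/\bar{\gamma}$ rather than $2\sqrt{\omega/\bar{\gamma}}$ in its intermediate step while still arriving at the same right-hand side.
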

\begin{proof}
	For the high SNR regime (i.e., $\bar{\gamma}\rightarrow\infty$), we have $\frac{\omega}{\bar{\gamma}}\rightarrow 0$, where $\omega\in\left\{\gamma_\mathrm{th},\hat{T}_\mathrm{th}\right\}$. Hence, by utilizing \eqref{eq-bessel},  $\frac{2\omega}{\bar{\gamma}}\mathcal{K}_1\left(\frac{2\omega}{\bar{\gamma}}\right)$ can be approximated as 
	\begin{align}
		\frac{2\omega}{\bar{\gamma}}\mathcal{K}_1\left(\frac{2\omega}{\bar{\gamma}}\right)\approx1+ \frac{\omega}{\bar{\gamma}}\left[2\zeta-1+2\log\left(\sqrt{\frac{\omega}{\bar{\gamma}}}\right)\right].\label{eq-prof2}
	\end{align}
	Next, by substituting  \eqref{eq-prof2} into \eqref{eq-out} and \eqref{eq-dor}, the proof is accomplished. 
\end{proof}
\section{Numerical Results}
In this section, we present numerical results to evaluate the considered system performance in terms of the OP and DOR, which are double-checked by the Monte-Carlo simulation method. To this end, we set the parameters as $\gamma_\mathrm{th}=0$dB, $R=5$Kbits, $B=2$GHz, $T_\mathrm{dt}=3$ms, $\bar{\gamma}=20$dB, $W=\left\{0.5,1,2,4,6\right\}$, and $N=\left\{2,4,6,8,10\right\}$.

Figs. \ref{subfig-out-g-w} and \ref{subfig-dor-g-w} respectively illustrate the behavior of OP and DOR in terms of the average SNR $\bar{\gamma}$ for given values of FA size $W$ under correlated Rayleigh fading channels. As expected, the OP and DOR decrease as $\bar{\gamma}$ increases, which is reasonable since the channel condition improves. Moreover, it can be seen that by increasing the FA size $W$ for a fixed number of ports $K$, the performance of OP and DOR improves. The reason for this behavior is that increasing the spatial separation between the FA ports by increasing $W$ for a fixed $K$ can reduce the spatial correlation between FA ports. Additionally, we can clearly observe that such an improvement is more noticeable when $K$ is large. The performance of OP and DOR in terms of $\bar{\gamma}$ for given values of $K$ under correlated Rayleigh fading channels is presented in Figs. \ref{subfig-out-g-k} and \ref{subfig-dor-g-k}, respectively. We can see that as the number of FA ports $K$ grows, lower values of the OP and DOR are provided. The main reason for such a behavior is that although increasing $K$ for a fixed value of $W$ raises the spatial correlation between FA ports, it can potentially improve the channel capacity, diversity gain, and spatial multiplexing at the same time. Hence, this can help mitigate fading and improve the overall link quality.
Furthermore, as we can see in both Figs. \ref{fig-snr1} and \ref{fig-snr2}, considering a FA reader instead of a single-antenna reader can significantly enhance the performance of BC in terms of OP and DOR. In order to evaluate how the FA reader affects the DOR performance in terms of transmitted data $R$ over BC, we present Fig. \ref{fig-dor-r} for selected values of $W$ and $K$. First, we can observe that as $W$ and $K$ increase simultaneously, the spatial correlation between FA ports becomes balanced; thereby, lower values of the OP and DOR are reached. Furthermore, as expected, it can be seen that as $R$ increases, the DOR performance becomes worse, such that
transmitting a high amount of data (e.g., $R=3$Kbits) with low delay is almost impractical when a single-antenna reader or a FA reader with small $W$ and $K$ are considered. However, thanks to the FA reader, a large amount of information with a small delay can be sent when the $W$ and $K$ are large enough.
%\begin{figure}[!t]
%	\centering
%	\includegraphics[width=1\columnwidth]{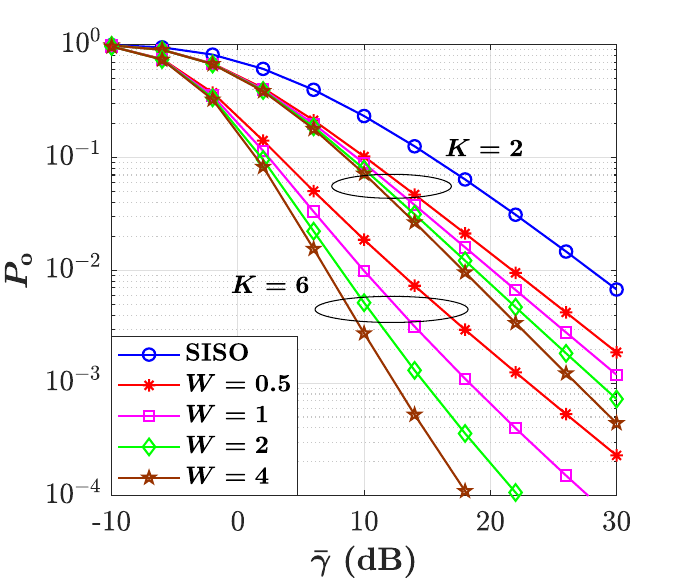}
%	\caption{OP versus average SNR for selected values of FA size $W$.}\label{fig-out-g-w}
%\end{figure}
%\begin{figure}[!t]
%	\centering
%	\includegraphics[width=1\columnwidth]{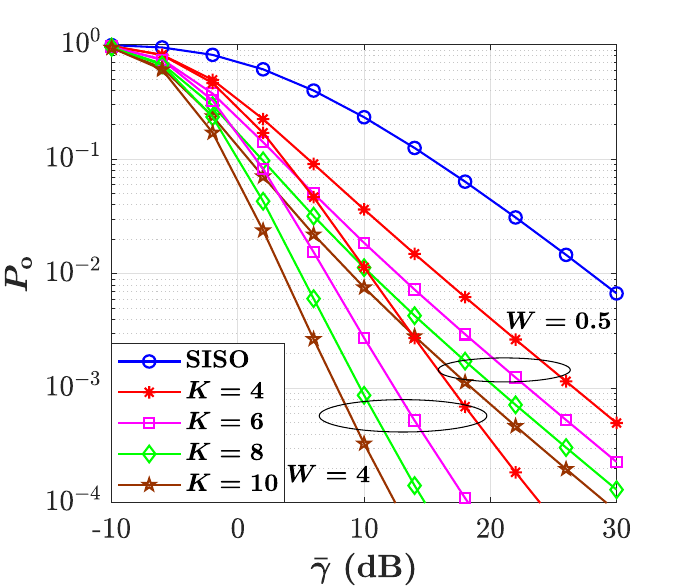}
%	\caption{OP versus average SNR for selected of values FA ports $K$.}\label{fig-out-g-k}
%\end{figure}
%\begin{figure}[!t]
%	\centering
%	\includegraphics[width=1\columnwidth]{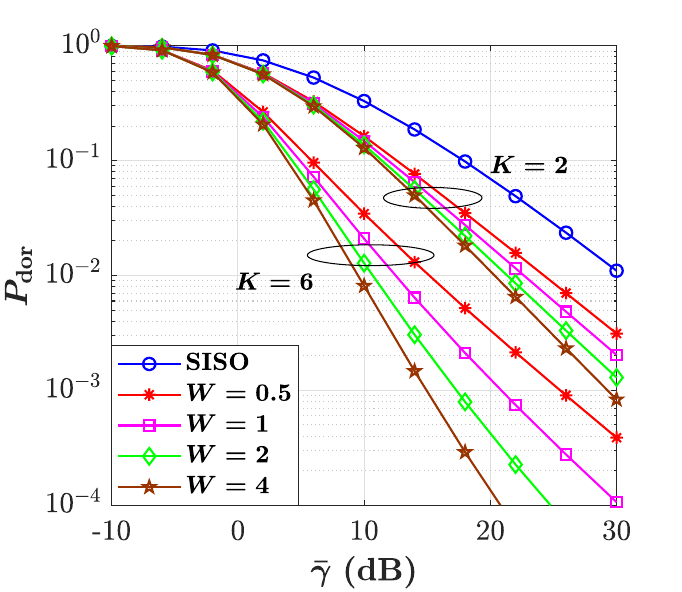}
%	\caption{OP versus average SNR for selected values of FA size $W$.}\label{fig-dor-g-w}
%\end{figure}
%\begin{figure}[!t]
%	\centering
%	\includegraphics[width=1\columnwidth]{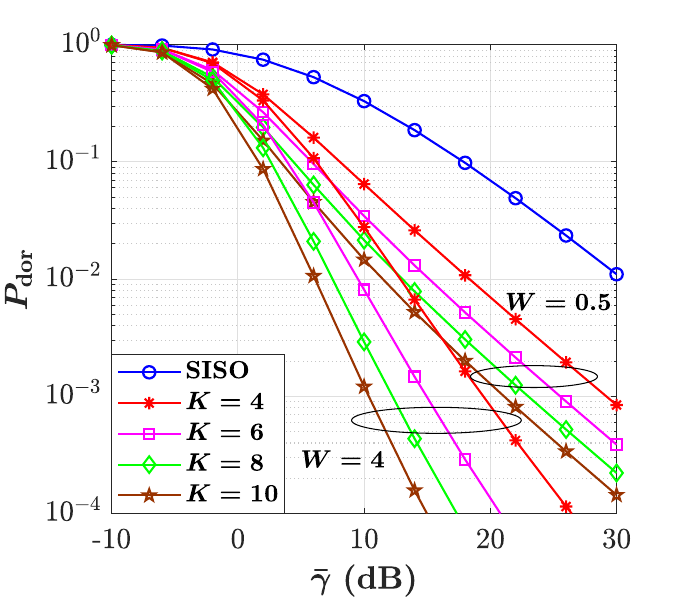}
%	\caption{DOR versus average SNR for selected of values FA ports $K$.}\label{fig-dor-g-k}
%\end{figure}
\begin{figure*}[t!]
	\centering
	\subfigure[OP]{%
		\includegraphics[width=0.47\textwidth]{out-g-w.eps}\label{subfig-out-g-w}%
	}
	\subfigure[DOR]{%
		\includegraphics[width=0.47\textwidth]{dor-g-w.eps}\label{subfig-dor-g-w}%
	}\hspace{0cm}%or more
	\caption{Performance of (a) OP and (b) DOR versus average SNR $\bar{\gamma}$ for selected values of FA size $W$.}\label{fig-snr1}\vspace{0cm}
\end{figure*}
\begin{figure*}[t!]
	\centering
	\subfigure[OP]{%
		\includegraphics[width=0.47\textwidth]{out-g-k.eps}\label{subfig-out-g-k}%
	}
	\subfigure[DOR]{%
		\includegraphics[width=0.47\textwidth]{dor-g-k.eps}\label{subfig-dor-g-k}%
	}\hspace{0cm}%or more
	\caption{Performance of (a) OP and (b) DOR versus average SNR $\bar{\gamma}$ for selected values of FA ports $K$.}\label{fig-snr2}\vspace{0cm}
\end{figure*}
\begin{figure}[!t]
	\centering
	\includegraphics[width=1\columnwidth]{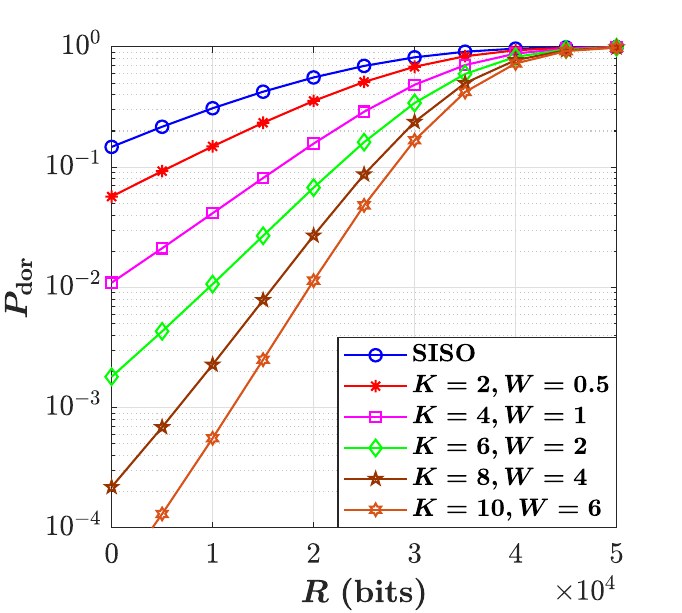}
	\caption{DOR versus amount of data $R$ for selected of values $W$ and  $K$.}\label{fig-dor-r}
\end{figure}
\section{Conclusion}
In this paper, we investigated the performance of BC in the presence of FA system. Particularly, we assumed that a single-antenna source aims to send information to a reader via wireless forward and backscatter channels. We also supposed that the reader includes a FA, where only one port can be activated for communication. Under such assumptions, we first derive the CDF of the equivalent channel (i.e., the maximum of $K$ correlated RVs such that each is the product of forward and backscatter channels) for the reader by exploiting the copula technique. Then, we derived the closed-form expressions of the OP and DOR under correlated Rayleigh fading channels. Furthermore, we obtained the asymptotic expressions of the OP and DOR in the high SNR regime. Eventually, our analytical results revealed that the FA reader can provide a remarkable performance in terms of the OP and DOR compared with the single-antenna reader over BC.

\bibliographystyle{IEEEtran}
\bibliography{sample.bib}
\end{document}